\documentclass[11pt]{article}

\usepackage{booktabs} 

\usepackage{amsmath,amsthm,amsfonts,amssymb,mathrsfs,bm,bbm}
\usepackage{xspace}
\usepackage{authblk}
\usepackage[margin=1in]{geometry}
\usepackage{comment}
\usepackage[T1]{fontenc}    
\usepackage{hyperref}       
\usepackage{url}            
\usepackage{nicefrac}       
\usepackage{microtype}      
\usepackage{cleveref}
\usepackage{txfonts}
\usepackage{xcolor}
\usepackage{libertine}

\setlength{\parskip}{5pt}
\setlength{\parindent}{0pt}

\usepackage[suppress]{color-edits}    
\addauthor{rdk}{blue}
\addauthor{ma}{red}
\addauthor{od}{purple}

\newtheorem{theorem}{Theorem}
\newtheorem{proposition}{Proposition}
\newtheorem{lemma}{Lemma}
\theoremstyle{definition}
\newtheorem{definition}{Definition}
\newtheorem{example}{Example}
\newtheorem{remark}{Remark}

\newcommand{\Exp}[1]{\mathbb{E}\left[#1\right]}
\newcommand{\expect}{\mathbb{E}}
\renewcommand{\Pr}{\text{Pr}}
\newcommand{\A}{\mathcal{A}}
\newcommand{\Rea}{\mathbb{R}}
\newcommand{\Rev}{\text{Rev}}
\newcommand{\Mat}{\mathcal{M}}
\newcommand{\Ind}{\mathcal{I}}
\newcommand{\MyerOPT}{\text{MyerOPT}}

\newcommand{\ones}{{\mathbbm{1}}}
\renewcommand{\vec}[1]{\mathbf{#1}}

\newcommand{\argmax}{\text{argmax}}

\title{Revenue Monotonicity Under Misspecified Bidders}
\date{}
\author{Makis Arsenis}
\author{Odysseas Drosis}
\author{Robert Kleinberg}
\affil{Cornell University, Ithaca, NY 14853, USA.\footnote{
Email: \texttt{\{marsenis,rdk\}@cs.cornell.edu, odysseas-11@hotmail.com}.}}

\begin{document}

\maketitle

\begin{abstract}
We investigate revenue guarantees for auction mechanisms in
a model where a distribution is specified for each bidder,
but only some of the distributions are correct.  The subset
of bidders whose distribution is correctly specified
(henceforth, the ``green bidders'') is unknown to the
auctioneer.  The question we address is whether the
auctioneer can run a mechanism that is guaranteed to obtain
at least as much revenue, in expectation, as would be
obtained by running an optimal mechanism on the green
bidders only. For single-parameter feasibility environments,
we find that the answer depends on the feasibility
constraint. For matroid environments, running the optimal
mechanism using all the specified distributions (including
the incorrect ones) guarantees at least as much revenue in
expectation as running the optimal mechanism on the green
bidders. For any feasibility constraint that is not a
matroid, there exists a way of setting the specified
distributions and the true distributions such that the
opposite conclusion holds.
\end{abstract}

\section{Introduction}

In a seminal paper nearly forty years ago~\cite{myerson81},
Roger Myerson derived a beautifully precise characterization
of optimal (i.e., revenue maximizing) mechanisms for Bayesian
single-parameter environments. One way this result has been
critiqued over the years is by noting that auctioneers may
have incorrect beliefs about bidders' values; if so, the
mechanism recommended by the theory will actually be
suboptimal.

In this paper we evaluate this critique by examining revenue
guarantees for optimal mechanisms when a subset of bidders'
value distributions are misspecified, but the auctioneer
doesn't know which of the distributions are incorrect. Our
model is inspired by the literature on {\em semi-random
adversaries} in the theoretical computer science literature,
particularly the work of Bradac et al.~\cite{BGSZ19} on
robust algorithms for the secretary problem.  In the model
we investigate here, the auctioneer is given (not
necessarily identical) distributions for each of $n$
bidders.  An unknown subset of the bidders, called the {\em
green bidders}, draw their values independently at random
from these distributions.  The other bidders, called the
{\em red bidders}, draw their values from distributions
other than the given ones.

The question we ask in this paper is, ``When can one
guarantee that the expected revenue of the optimal mechanism
for the given distributions is at least as great as the
expected revenue that would be obtained by excluding the red
bidders and running an optimal mechanism on the green subset
of bidders?'' In other words, can the presence of bidders
with misspecified distributions in a market be worse (for
the auctioneer's expected revenue) than if those bidders
were absent?  Or does the increased competition from
incorporating the red bidders always offset the revenue loss
due to ascribing the wrong distribution to them?

We give a precise answer to this question, for
single-parameter feasibility environments. We show that the
answer depends on the structure of the feasibility
constraint that defines which sets of bidders may win the
auction. For matroid feasibility constraints, the revenue of
the optimal mechanism is always greater than or equal to the
revenue obtained by running the optimal mechanism on the set
of green bidders. For any feasibility constraint that is not
a matroid, the opposite holds true: there is a way of
setting the specified distribution and the true
distributions such that the revenue of the optimal mechanism
for the specified distributions, when bids are drawn from
the true distributions, is {\em strictly less} than the
revenue of the optimal mechanism on the green bidders only.

The economic intuition behind this result is fairly easy to
explain. The matroid property guarantees that the winning
red bidders in the auction can be put in one-to-one
correspondence with losing green bidders who would have won
in the absence of their red competitors, in such a way that
the revenue collected from each winning red bidder offsets
the lost revenue from the corresponding green bidder whom he
or she displaces.  When the feasibility constraint is not a
matroid, this one-to-one correspondence does not always
exist; a single green bidder might be displaced by two or
more red bidders each of whom pays almost nothing. The
optimal mechanism allows this to happen at some bid
profiles, because the low revenue received on such bid
profiles is compensated by the high expected revenue that
would be received if the red bidders had sampled values from
elsewhere in their distributions. However, since the red
bidders' distributions are misspecified, the anticipated
revenue from these more favorable bid profiles may never
materialize.

Our result can be interpreted as a type of revenue
monotonicity statement for optimal mechanisms in
single-parameter matroid environments. However it does not
follow from other known results on revenue monotonicity, and
it is illuminating to draw some points of distinction
between our result and earlier ones. Let us begin by
distinguishing {\em pointwise} and {\em setwise} revenue
monotonicity results: the former concern how the revenue
earned on individual bid profiles varies as the bids are
increased, the latter concern how (expected) revenue varies
as the set of bidders is enlarged.
\begin{itemize}
  \item VCG mechanisms are neither pointwise
  nor setwise revenue monotone in general,
  but in single-parameter matroid feasibility
  environments, VCG revenue satisfies
  both pointwise and setwise monotonicity. In fact,
  Dughmi, Roughgarden, and Soundararajan~\cite{drs09}
  observed that VCG revenue obeys setwise
  monotonicity {\em if and only if} the
  feasibility constraint is a matroid.
  The proof of this result in~\cite{drs09}
  rests on a slightly erroneous characterization
  of matroids, and one (small) contribution of
  our work is to correct this minor error by
  substituting a valid characterization
  of matroids, namely \Cref{lem:nonmatroid}
  below.
  \item Myerson's optimal mechanism is not
  pointwise revenue monotone, even for single-item
  auctions. For example, consider using Myerson's optimal
  mechanism to sell a single item
  to Alice whose value is uniformly distributed
  in $[0,4]$ and Bob whose value is uniformly
  distributed in $[0,8]$. When Alice
  bids 0 and Bob bids 5, Bob wins and pays 4.
  If Alice increases her bid to 4, she wins but pays
  only 3.
  \item However, Myerson's optimal mechanism is
  {\em always} setwise revenue monotone in
  single-parameter environments with downward-closed
  feasibility constraints, regardless of
  whether the feasibility constraint is a matroid.
  This is because the mechanism's expected revenue
  is equal to the expectation of the maximum, over
  all feasible sets of winners, of the winners'
  combined ironed virtual value. Enlarging the
  set of bidders only enlarges the collection
  of sets over which this maximization is performed,
  hence it cannot decrease the expectation of the
  maximum.
\end{itemize}
Our main result is analogous to the setwise revenue
monotonicity of Myerson revenue, except that we are
considering monotonicity with respect to the operation of
enlarging the set of bidders {\em by adding bidders whose
value distributions are potentially misspecified.} We show
that the behavior of Myerson revenue with respect to this
stricter notion of setwise revenue monotonicity holds under
matroid feasibility constraints {\em but not under any other
feasibility constraints}, in contrast to the traditional
setwise revenue monotonicity that is satisfied by Myerson
mechanisms under arbitrarily downward-closed constraints.

\subsection{Related Work}

{\em Semi-random models} are a class of models
studied in the theoretical computer science literature
in which the input data is partly generated by random
sampling, and partly by a worst-case adversary.
Initially studied in the setting of
graph coloring~\cite{BlumSpencer} and
graph partitioning~\cite{FeigeKilian,mmv12},
the study of semi-random models has since been
broadened to
statistical estimation~\cite{diakonikolas2019robust,lai2016agnostic},
multi-armed bandits~\cite{lykouris2018stochastic},
and secretary problems~\cite{BGSZ19}.
Our work extends semi-random models into the
realm of Bayesian mechanism design. In particular,
our model of green and red bidders resembles in a sense that of
Bradac el al.~\cite{BGSZ19} for the secretary problem which
served as inspiration for this work. In both settings,
green players/elements behave randomly and independently
while red players/elements behave adversarially. In the
secretary model of~\cite{BGSZ19},
red elements can choose arbitrary arrival times
while green elements' arrival
times are i.i.d.~uniform in $[0,1]$ and
independent of the red arrival times.
Similarly, in our setting red bidders
can set their bids arbitrarily whereas
green bidders sample their bids from known
distributions, independently
of the red bidders and one another.

Our work can be seen as part of a general framework of {\em
robust mechanism design}, a research direction inspired by
Wilson~\cite{wilson1987}, who famously wrote,
\begin{quotation} Game  theory  has  a  great  advantage  in
explicitly analyzing  the  consequences  of  trading  rules
that presumably are really common knowledge; it is deficient
to the extent it assumes other features to be common
knowledge, such as one agent’s probability assessment about
another’s  preferences  or  information.  I  foresee the
progress of game theory as depending on successive
reductions in the base of common knowledge required to
conduct useful analyses of practical problems. Only by
repeated weakening of common knowledge assumptions will the
theory approximate reality.  \end{quotation} This {\em
Wilson doctrine} has been used to justify more robust
solution concepts such as dominant strategy and ex post
implementation. The question of when these stronger solution
concepts are required in order to ensure robustness was
explored in a research program initiated by Bergemann and
Morris~\cite{bergemann05} and surveyed
in~\cite{bm-survey-robust}.  Robustness and the Wilson
doctrine have also been used to justify
prior-free~\cite{Goldberg01} and
prior-independent~\cite{hr09} mechanisms as well as
mechanisms that learn from
samples~\cite{bubeck2019multi,cole-roughgarden,singlesample,huang2018making,morgenstern-roughgarden}.
A different approach to robust mechanism design assumes
that, rather than being given the bid distributions, the
designer is given constraints on the set of potential bid
distributions and aims to optimize a minimax objective on
the expected revenue. For example Azar and
Micali~\cite{AzarMicali13} assume the seller knows only the
mean and variance of each bidder's distribution, Carrasco et
al.~\cite{carrasc18} generalize this to sellers that know
the first $N$ moments of each bidder's distribution, Azar et
al.~\cite{AMDW13} consider sellers that know the median or
other quantiles of the distributions, Bergemann and
Schlag~\cite{bergemann2011robust} assume the seller is given
distributions that are known to lie in a small neighborhood
of the true distributions, and
Carroll~\cite{carroll2017robustness} introduced a model in
which bids are correlated but the seller only knows each
bidder's marginal distribution
(see~\cite{gravin2018separation,bei2019correlation} for
further work in this correlation-robust model).

Another related subject is that of {\em revenue
monotonicity} of mechanisms --- regardless of the existence
of adversarial bidders.  Dughmi et al.~\cite{drs09} prove a
result very close in spirit to ours. They consider the VCG
mechanism in a single-parameter downward-closed environment
and prove that it is revenue monotone if and only if the
environment is a matroid akin to our Theorems
\ref{thm:rmmb-matroid} and \ref{thm:rmmb-non-matroid}.

Rastegari et al.~\cite{RCLB07} study revenue monotonicity
properties of mechanisms (including VCG) for Combinatorial
Auctions. Under some reasonable assumptions, they prove that
no mechanism can be revenue monotone when bidders have
single-minded valuations.

\section{Preliminaries}

\subsection{Matroids}
\label{sec.matroids}

Given a finite ground set $E$ and a collection $\Ind
\subseteq 2^E$ of subsets of $E$ such that $\emptyset \in
\Ind$, we call $\Mat = (E, \Ind)$ a {\em set system}. $\Mat$
is a {\em downward-closed} set system if $\Ind$ satisfies
the following property:

\begin{enumerate}
\item[(I1)] {\bf (downward-closed axiom)} If $B \in \Ind$
and $A \subseteq B$ then $A \in \Ind$.
\end{enumerate}

\noindent
Furthermore, $\Mat$ is called a {\em matroid} if it satisfies
both (I1) and (I2):

\begin{enumerate}
\item[(I2)] {\bf (exchange axiom)} If $A, B \in \Ind$ and
$|A| > |B|$ then there exists $x \in A \backslash B$ such
that $B + x \in \Ind$\footnote{We use the shorthand $B + x$
(resp.~$B - x$) to mean $B \cup \{x\}$ (resp.~$B \backslash
\{x\}$) throughout the paper.}.
\end{enumerate}

In the context of matroids, sets in (resp.~not in) $\Ind$
are called {\em independent} (resp.~{\em dependent}). An
(inclusion-wise) maximal independent set is called a {\em
basis}. A fundamental consequence of axioms (I1), (I2) is
that all bases of a matroid have equal cardinality and this
common quantity is called the {\em rank} of the matroid. A
{\em circuit} is a minimal dependent set. The set of all
circuits of a matroid will be denoted by $\mathcal{C}$. The
following is a standard property of $\mathcal{C}$.

\begin{proposition}[{\cite[Proposition 1.4.11]{oxley}}]
\label{prop:circuits}
For any $\mathcal{C}$ which is the circuit set of a matroid
$\Mat$, let $C_1, C_2 \in \mathcal{C}, e \in C_1 \cap C_2$
and $f \in C_1 \backslash C_2$. Then there exists $C_3 \in
\mathcal{C}$ such that $f \in C_3 \subseteq (C_1 \cup C_2) -
e$.
\end{proposition}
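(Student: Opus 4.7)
My plan is to pass to the matroid closure operator $\mathrm{cl}$ and reduce the claim to a single transitivity-of-closure argument. Writing $D := (C_1 \cup C_2) - e$, the first step is to observe that finding a circuit $C_3 \subseteq D$ with $f \in C_3$ is equivalent to showing $f \in \mathrm{cl}(D - f)$: any such circuit witnesses $f \in \mathrm{cl}(C_3 - f) \subseteq \mathrm{cl}(D - f)$, and conversely if $f \in \mathrm{cl}(D - f)$ then $D$ is dependent through $f$, so any inclusion-minimal dependent subset of $D$ containing $f$ is the desired circuit.

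Next I would translate each circuit hypothesis into a closure statement. Since $C_1$ is a circuit containing $f$, the set $C_1 - f$ is a proper subset of a minimal dependent set, hence independent but spanning $f$, so $f \in \mathrm{cl}(C_1 - f)$. Noting that $e \in C_1$ and $e \neq f$ (the latter because $e \in C_2$ while $f \notin C_2$), we can rewrite $C_1 - f = (C_1 - e - f) \cup \{e\}$. Analogously, because $C_2$ is a circuit containing $e$, we have $e \in \mathrm{cl}(C_2 - e)$.

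The final step is to chain these dependences. Let $A := (C_1 - e - f) \cup (C_2 - e)$, which equals $D - f$ since $f \notin C_2$. Monotonicity of closure gives $e \in \mathrm{cl}(C_2 - e) \subseteq \mathrm{cl}(A)$, and trivially $C_1 - e - f \subseteq A \subseteq \mathrm{cl}(A)$, so altogether $C_1 - f \subseteq \mathrm{cl}(A)$. By idempotence of closure, $\mathrm{cl}(C_1 - f) \subseteq \mathrm{cl}(A)$, and in particular $f \in \mathrm{cl}(A) = \mathrm{cl}(D - f)$, which is exactly what we needed.

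I expect the only delicate ingredient to be the transitivity/idempotence of the closure operator; in a self-contained presentation I would either cite it as a standard matroid fact or deduce it quickly via rank submodularity, using the translation $e \in \mathrm{cl}(X) \iff r(X + e) = r(X)$. Beyond that, every manipulation is a direct set-theoretic consequence of the assumptions $e \in C_1 \cap C_2$ and $f \in C_1 \setminus C_2$.
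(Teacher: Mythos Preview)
Your argument is correct. The reduction to showing $f \in \mathrm{cl}(D-f)$ is sound, the closure manipulations are valid, and the set identity $A = D - f$ checks out because $f \notin C_2$. The only standard facts you invoke are monotonicity and idempotence of $\mathrm{cl}$, both of which follow immediately from the rank characterisation $\mathrm{cl}(X) = \{x : r(X+x)=r(X)\}$, so nothing is hidden.

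Note, however, that the paper does not supply its own proof of this proposition; it simply cites it as \cite[Proposition~1.4.11]{oxley}. Oxley's argument there is \emph{not} closure-based: it proceeds by contradiction, choosing a counterexample $(C_1,C_2,e,f)$ with $|C_1 \cup C_2|$ minimal, invoking the weak circuit-elimination axiom to produce an auxiliary circuit inside $(C_1 \cup C_2)-e$, and then iterating to reach a smaller counterexample. Your closure/rank argument is shorter and conceptually cleaner, but it presupposes that the closure operator and its properties have already been developed, whereas Oxley's proof is placed early in the book and works directly from the circuit axioms. Either route is perfectly acceptable here since the paper only uses the statement as a black box.
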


For any set system $\Mat = (E, \Ind)$ and any given $S
\subseteq E$, define $\Ind_{\mid S} = \Ind \cap 2^S$ and
call $\Mat_{\mid S} = (S, \Ind_{\mid S})$ the {\em
restriction} of $\Mat$ on $S$. Notice that restrictions
maintain properties (I1), (I2) if they were satisfied
already in $\Mat$.

In what follows, we provide some examples of common matroids.
The reader is invited to check that they indeed satisfy
(I1), (I2). For a more in-depth study of matroid theory, we
point the reader to the classic text of Oxley \cite{oxley}.

\paragraph{Uniform matroids}
When $\Ind = \{ S \subseteq E : |S| \leq k \}$ for some
positive integer $k \leq |E|$, then $(E, \Ind)$ is called a
{\em uniform (rank $k$)} matroid.

\paragraph{Graphic matroids}
Given a graph $G = (V, E)$ (possibly containing parallel edges
and self-loops) let $\Ind$ include all subsets of edges
which do {\em not} form a cycle, i.e. the subgraph $G[S] =
(V, S)$ is a forest. Then $(E, \Ind)$ forms a matroid called {\em
graphic} matroid. \macomment{Oxley calls those matroids
cyclic and then defines graphic matroids to be any matroid
isomorphic to a cyclic matroid. I think this is not an
important distinction for our purposes .}
Graphic matroids capture many of the properties of general
matroids and notions like bases and circuits have their
graphic counterparts of spanning trees and cycles
respectively.

\paragraph{Transversal matroids}
Let $G = (A \cup B, E)$ be a simple, bipartite graph and
define $\Ind$ to include all subsets $S \subseteq A$ of
vertices for which the induced subgraph on $S \cup B$
contains a matching covering all vertices in $S$. Then $(A,
\Ind)$ is called a {\em transversal} matroid.

If $\Mat = (E, \Ind)$ is equipped with a weight function $w : E
\rightarrow \Rea^+$ it is called a {\em weighted} matroid.
The problem of finding an independent set of maximum sum of
weights\footnote{An equivalent formulation asks for a basis
of maximum total weight.}
is central to the study of matroids. A very simple greedy
algorithm is guaranteed to find the optimal solution and in
fact matroids are exactly the downward-closed systems for
which that greedy algorithm is always guaranteed to find the
optimal solution.

\paragraph{{\sc Greedy}}
Sort the elements of $E$ in non-increasing order of weights
$w(e_1) \geq w(e_2) \geq \ldots \geq w(e_n)$. Loop through
the elements in that order adding each element to the current
solution as long as the current solution remains an
independent set.

\begin{lemma}[{\cite[Lemma 1.8.3.]{oxley}}]
Let $\Mat = (E, \Ind)$ be a weighted downward-closed set
system. Then {\sc Greedy} is guaranteed to return an
independent set of maximum total weight for every weight
function $w : E \rightarrow \Rea^+$ if and only $\Mat$ is a
matroid.
\end{lemma}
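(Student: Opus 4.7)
\medskip
\noindent\textbf{Proof proposal.} I would prove the two directions separately using classical exchange arguments, treating the harder matroid-implies-greedy direction first.

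\emph{Matroid implies {\sc Greedy} is optimal.} Assume $\Mat = (E, \Ind)$ is a matroid. Let $G = (g_1, \ldots, g_r)$ denote the output of {\sc Greedy} listed in the order it inserts elements (hence in non-increasing weight), and let $O$ be any maximum-weight independent set, extended (without decreasing weight, since $w \geq 0$) to a basis $O = (o_1, \ldots, o_r)$ sorted in non-increasing weight; axiom (I2) applied to $G$ and any larger independent set ensures $G$ is maximal, so $|G| = |O| = r$. The heart of the argument is the claim $w(g_i) \geq w(o_i)$ for every $i$, which immediately gives $w(G) \geq w(O)$. Suppose toward contradiction that $i$ is minimal with $w(o_i) > w(g_i)$. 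Then $A = \{o_1, \ldots, o_i\}$ and $B = \{g_1, \ldots, g_{i-1}\}$ lie in $\Ind$ with $|A| > |B|$, so (I2) produces $o_j \in A \setminus B$ with $B + o_j \in \Ind$. Since $w(o_j) \geq w(o_i) > w(g_i)$, {\sc Greedy} considered $o_j$ strictly before $g_i$; at that moment its partial solution was a subset of $B$, and (I1) applied to $B + o_j \in \Ind$ shows that adding $o_j$ to this subset preserves independence, contradicting the fact that {\sc Greedy} skipped $o_j$.

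\emph{{\sc Greedy} optimal on every weight function implies matroid.} I would prove the contrapositive. If $\Mat$ is downward-closed but violates (I2), fix $A, B \in \Ind$ with $|A| > |B|$ such that no $x \in A \setminus B$ satisfies $B + x \in \Ind$. Pick $\epsilon > 0$ small enough that $\epsilon \cdot |B \setminus A| < |A| - |B|$ (trivially satisfied when $B \subseteq A$) and assign
\[
w(e) = \begin{cases} 1 + \epsilon & e \in B, \\ 1 & e \in A \setminus B, \\ 0 & \text{otherwise.} \end{cases}
\]
{\sc Greedy} first considers the elements of $B$ and inserts all of them (because $B \in \Ind$); next it processes $A \setminus B$, but by hypothesis none of those elements extend $B$, so none are added; the remaining zero-weight elements contribute nothing to the total. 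Hence {\sc Greedy} outputs an independent set of weight $|B|(1+\epsilon)$. On the other hand $A \in \Ind$ has weight $|A| + |A \cap B|\,\epsilon$, and the choice of $\epsilon$ rearranges to $|A| + |A \cap B|\,\epsilon > |B|(1+\epsilon)$, so {\sc Greedy} is strictly suboptimal on this instance.

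The main obstacle I anticipate is the exchange step in the forward direction: one must verify that the element $o_j$ produced by (I2) was genuinely \emph{available} to {\sc Greedy} at the step it chose $g_i$. This is precisely where downward-closure (I1) is used, to propagate independence of $B + o_j$ down to the subset of $B$ that constituted {\sc Greedy}'s partial solution at that moment. Once this point is handled cleanly the rest of both directions reduces to bookkeeping.
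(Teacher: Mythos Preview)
The paper does not supply its own proof of this lemma; it simply cites Oxley~\cite[Lemma 1.8.3]{oxley} and moves on. So there is no in-paper argument to compare against. Your proposal is the standard textbook proof (essentially the one in Oxley): a coordinate-wise exchange inequality $w(g_i) \geq w(o_i)$ for the forward direction, and an explicit bad weight function witnessing a failure of (I2) for the converse. Both directions are correct, and your identification of the delicate point---using (I1) to ensure the exchange element $o_j$ was genuinely admissible at the moment {\sc Greedy} considered it---is exactly right.

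One small technicality: the lemma is stated for weight functions $w : E \to \Rea^+$, and in your converse construction you assign weight $0$ to elements outside $A \cup B$. If $\Rea^+$ is read as strictly positive reals this is formally out of bounds; the fix is trivial (use a sufficiently small $\delta > 0$ in place of $0$, or restrict attention to $E = A \cup B$), but worth patching if you write this up.
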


In what follows we're going to assume without loss of
generality that the function $w$ is one-to-one, meaning that
no two element have the same weight. All proofs can be
adapted to work in the general case using any deterministic
tie breaking rule.

The following proposition provides a convenient way for
updating the solution to an optimization problem under
matroid constraints when new elements are added. A proof is
included in the Appendix.

\begin{proposition}
\label{prop:matroid-update}
Let $\Mat = (E, \Ind)$ be a weighted matroid with weight
function $w : E \rightarrow
\Rea^+$. Consider the max-weight independent set $I$ of
the restricted matroid $\Mat_{\mid E - x}$. Then the
max-weight independent set $I^*$ of
$\Mat$ can be obtained from $I$ as follows: if $(I + x)
\in \Ind$ then $I^* = I + x$, otherwise, $I^* = (I + x) -
y$ where $y$ is the minimum-weight element in the unique
circuit $C$ of $I + x$.
\end{proposition}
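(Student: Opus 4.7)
My plan is to proceed by case analysis on whether $I + x$ is independent in $\Mat$.

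If $I + x \in \Ind$, I argue directly: any $J \in \Ind$ either misses $x$, in which case $J \in \Ind_{\mid E-x}$ and $w(J) \leq w(I)$, or contains $x$, in which case $J - x \in \Ind_{\mid E-x}$ and $w(J) = w(J-x) + w(x) \leq w(I) + w(x) = w(I+x)$. Since $I + x$ is itself independent and the max-weight independent set is unique under distinct weights, $I^* = I + x$.

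If $I + x \notin \Ind$, I first establish uniqueness of the circuit $C \subseteq I + x$. Any such circuit must contain $x$ (otherwise it lies in the independent set $I$), and if two distinct circuits $C_1, C_2 \subseteq I + x$ existed, applying \Cref{prop:circuits} at the common element $x$ would produce a circuit $C_3 \subseteq (C_1 \cup C_2) - x \subseteq I$, contradicting $I \in \Ind$. Independence of $I^{**} := (I+x) - y$ then follows because any circuit contained in it would be a circuit of $I + x$ missing $y$, again contradicting uniqueness. Moreover, $I + x \notin \Ind$ together with the maximality of $I$ in $\Mat_{\mid E-x}$ forces $I$ to be a basis of $\Mat$ as well, so $|I^{**}| = |I| = r(\Mat)$ and $I^{**}$ is a basis of $\Mat$.

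To show $I^{**}$ is max-weight I use the standard local-optimality characterization: a basis $B$ is max-weight iff $w(e) \geq w(f)$ for every $e \in B, f \notin B$ with $B - e + f \in \Ind$. I verify this for $I^{**}$ by splitting swaps $(e,f)$ into cases based on whether $e = x$ or $e \in I \setminus \{y\}$, and whether $f = y$ or $f \in E \setminus (I + x)$. Swaps involving $y$ are handled using that $y$ has minimum weight in $C$; the other swaps aim to reduce to the max-weight property of $I$ in $\Mat_{\mid E-x}$. The delicate sub-case is $e \in I \setminus \{y\}$ with $f \in E \setminus (I + x)$, where independence of $I^{**} - e + f$ does not immediately give independence of $I - e + f$. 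To bridge this I argue by contradiction: any circuit $D \subseteq I - e + f$ must contain both $f$ (since $I - e$ is independent) and $y$ (otherwise $D$ would lie in the independent set $I^{**} - e + f$); eliminating $y$ between $D$ and $C$ via \Cref{prop:circuits} produces a circuit $C_3 \ni f$ inside $(I - y) + \{x, f\}$, and a further case split on whether $x \in C_3$---possibly combined with a second application of \Cref{prop:circuits}---lets me route $e$ back through either uniqueness of $C$ in $I + x$ or the max-weight property of $I$. This repeated circuit-elimination bookkeeping is the main obstacle; the remainder is routine case-checking.
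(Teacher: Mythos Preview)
Your argument is correct and takes a genuinely different route from the paper. The paper runs {\sc Greedy} in parallel on $\Mat$ and on $\Mat_{\mid E-x}$ and tracks where the two executions diverge: at most once when $x$ is inserted and once when $y$ is skipped, with a single appeal to \Cref{prop:circuits} ruling out any later divergence. You instead invoke the basis-exchange (local-optimality) characterization of the max-weight basis and verify it for $I^{**}$ via a four-way case split on $(e,f)$. The Greedy argument is shorter and avoids the nested circuit eliminations you flag as the ``main obstacle''; your approach is more self-contained in that it does not import the optimality of {\sc Greedy} as a black box, at the cost of more bookkeeping. For the record, your delicate sub-case does close: when $I-e+f$ is dependent its unique circuit $D$ contains $y$, so $w(y)\geq w(f)$ by optimality of $I$; then either $e\in C$ (giving $w(e)\geq w(y)\geq w(f)$), or $e\notin C$, in which case the fundamental circuit of $f$ in $I^{**}+f$ contains both $e$ and $x$, and eliminating $x$ against $C$ yields a circuit in $I+f$ containing $e$, forcing $e\in D$ and a contradiction.
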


\subsection{Optimal Mechanism Design}

We study auctions modeled as a {\em Bayesian
single-parameter environment}, a standard mechanism design
setting in which a {\em seller} (or mechanism designer)
holds many identical copies of an item they want to sell. A
set of $n$ bidders (or players), numbered $1$ through $n$,
participate in the auction and each bidder $i$ has a
private, non-negative value $\varv_i \sim F_i$, sampled
(independently across bidders) from a distribution $F_i$
known to the seller.
Abusing notation, we'll use $F_i$ to also denote the
cumulative distribution function and $f_i$ to denote the
probability density function of the respective distribution.
The value of each bidder expresses
their valuation for receiving one item.  Let $V_i$ be the
support of distribution $F_i$ and define $V = V_1 \times
\ldots \times V_n$. For a vector $\vec{v} \in V$, we use the
standard notation $\vec v_{-i} = (\varv_1, \ldots,
\varv_{i-1}, \varv_{i+1}, \ldots, \varv_n)$ to express the
vector of valuations of all bidders {\em except} bidder $i$.
When the index set $[n]$ is partitioned into two sets
$A,B$ and we have vectors $\vec{v}_A \in \Rea^A, \, \vec{w}_B \in \Rea^B$,
we will abuse notation and let $(\vec{v}_A,\vec{w}_B)$ denote the
vector obtained by interleaving
$\vec{v}_A$ and $\vec{w}_B$, i.e.~$( \vec{v}_A,\vec{w}_B)$
is the vector $\vec{u} \in \Rea^n$
specified by
\[ u_i = \begin{cases} v_i & \mbox{if } i \in A \\
                       w_i & \mbox{if } i \in B. \end{cases} \]
Similarly, when $\vec{v} \in V$, $i \in [n]$, and $z \in \Rea$,
$(z,\vec v_{-i})$ will denote the vector obtained by replacing the
$i^{\mathrm{th}}$ component of $\vec{v}$ with $z$.

A {\em feasibility constraint} $I \subseteq 2^{[n]}$ defines all
subsets of bidders that can be simultaneously declared
winners of the auction. We will interchangeably denote
elements of $I$ both as subsets of $[n]$ and as vectors
in $\{0, 1\}^n$. Of special interest are feasibility
constraints which define the independent sets of a matroid. We will
sometimes use the phrase {\em matroid market} to indicate this
fact. Matroid markets model many real world applications.
For example when selling $k$ identical copies of an item, the market is
a uniform rank $k$ matroid. Another example is kidney
exchange markets which can be modeled as transversal
matroids (\cite{Roth05}).

In a {\em sealed-bid auction}, each bidder $i$ submits a
{\em bid} $b_i \in V_i$ simultaneously to the mechanism.
Formally, a {\em mechanism} $\mathcal{A}$ is a pair $(x, p)$ of an
allocation rule $x : V \rightarrow I$ accepting the bids and
choosing a feasible outcome and a {\em payment rule} $p : V
\rightarrow \Rea^n$ assigning each bidder a monetary payment
they need to make to the mechanism. We denote by $x_i(\vec
b)$ (or just $x_i$ when clear from the context) the $i$-th
component of the 0-1 vector $x(\vec b)$ and similarly for
$p$.  An allocation rule is called {\em monotone} if the
function $x_i(z, \vec b_{-i})$ is monotone non-decreasing in
$z$ for any vector $\vec b_{-i} \in V_{-i}$ and any bidder
$i$.

We assume bidders have {\em quasilinear utilities}
meaning that bidder's $i$ utility for winning the auction and
having to pay a price $p_i$ is $u_i = \varv_i - p_i$ and $0$ if
they do not win and pay nothing. Bidders are selfish agents aiming to
maximize their own utility.

A mechanism is called {\em truthful} if bidding $b_i =
\varv_i$ is a {\em dominant strategy} for each bidder, i.e.
no bidder can increase their utility by reporting $b_i \neq
\varv_i$ regardless the values and bids of the other
bidders. An allocation rule $x$ is called {\em
implementable} if there exists a payment rule $p$ such that
$(x, p)$ is truthful. Such mechanisms are well understood
and easy to reason about since we can predict how the
bidders are going to behave.  In what follows we focus our
attention only on truthful mechanisms and thus use the terms
value and bid interchangeably.

A well known result of Myerson (\cite{myerson81}) states that a given
allocation rule $x$ is implementable if and only if $x$ is
monotone. In case $x$ is monotone, Myerson gives an explicit
formula for the unique\footnote{Unique up to the normalizing
assumption that $p_i = 0$ whenever $b_i = 0$.} payment rule
such that $(x, p)$ is truthful.
In the single-parameter setting we're studying, the payment
rule can be informally described as follows: $p_i$ is equal
to the minimum $b_i$ that bidder $i$ has to report such that
they are included in the set of winners --- we'll refer to
such a $b_i$ as the {\em critical bid} of bidder $i$.

The mechanism designer, who is collecting all the payments,
commonly aims to maximize her {\em expected revenue} which
for a mechanism $\mathcal{A}$ is defined as
$\text{Rev}(\mathcal{A}) = \expect_{b_i \sim F_i} \left[
\sum_{i \in [n]} p_i \right]$.

\begin{lemma}[{\cite{myerson81}}]
\label{lem:myerson}
For any truthful mechanism $(x, p)$ and any bidder $i \in [n]$:

\[ \Exp{ p_i } = \Exp{
\phi_i(b_i) \cdot x_i(b_i, \vec{b_{-i}}) } \]

\noindent
where the expectations are taken over $b_1, \ldots, b_n \sim
F_1, \ldots, F_n$, the function $\phi_i(\cdot)$ is defined
as

\[ \phi_i(z) = z - \frac{1 - F_i(z)}{f_i(z)} \]

\noindent
and $\phi_i(b_i)$ is called the {\em virtual value} of bidder $i$.
\end{lemma}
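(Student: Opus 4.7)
The plan is to first derive a pointwise payment identity by fixing the other bidders' bids, then integrate against $F_i$ and swap the order of integration to produce the virtual-value expression.

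First I would fix an arbitrary $\vec b_{-i} \in V_{-i}$ and exploit truthfulness. Since $(x,p)$ is truthful and the setting is single-parameter, Myerson's characterization implies that $x_i(\cdot,\vec b_{-i})$ is monotone non-decreasing and that the payment $p_i(b_i,\vec b_{-i})$ is uniquely determined up to the normalization $p_i(0,\vec b_{-i})=0$. Concretely, the payment must equal bidder $i$'s critical bid whenever $x_i=1$, which yields the standard formula
\[
p_i(b_i,\vec b_{-i}) \;=\; b_i\cdot x_i(b_i,\vec b_{-i}) \;-\; \int_0^{b_i} x_i(z,\vec b_{-i})\,dz .
\]
(For a monotone $\{0,1\}$-valued $x_i$ this is immediate from the fact that the area under the allocation curve between $0$ and $b_i$ equals the winning threshold times $x_i(b_i,\vec b_{-i})$; for general monotone $x_i$ it follows from a standard argument bounding utility from above and below by deviations.)

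Next I would take the expectation over $b_i\sim F_i$ conditional on $\vec b_{-i}$ and apply Fubini to the double integral arising from the $\int_0^{b_i} x_i(z,\vec b_{-i})\,dz$ term:
\[
\int_0^\infty \!\!\int_0^b x_i(z,\vec b_{-i})\,dz\, f_i(b)\,db
\;=\;
\int_0^\infty x_i(z,\vec b_{-i}) \int_z^\infty f_i(b)\,db\,dz
\;=\;
\int_0^\infty x_i(z,\vec b_{-i})\bigl(1-F_i(z)\bigr)\,dz.
\]
Rewriting $1-F_i(z)=\frac{1-F_i(z)}{f_i(z)}\,f_i(z)$ turns this into an expectation over $b_i\sim F_i$, and combining with the first term gives
\[
\Exp{p_i \mid \vec b_{-i}} \;=\; \Exp{\Bigl(b_i - \tfrac{1-F_i(b_i)}{f_i(b_i)}\Bigr)\,x_i(b_i,\vec b_{-i}) \,\Big|\, \vec b_{-i}} \;=\; \Exp{\phi_i(b_i)\,x_i(b_i,\vec b_{-i}) \mid \vec b_{-i}}.
\]
Taking expectation over $\vec b_{-i}\sim F_{-i}$ yields the stated identity by the tower property.

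The only non-routine step is justifying the pointwise payment formula, which relies on Myerson's truthful-implementability characterization stated in the preliminaries; the remainder is a clean Fubini swap plus algebraic rearrangement. One must be a bit careful about the support of $F_i$ (so that $\phi_i$ is well-defined on $V_i$) and about integrability at infinity, but under the standard regularity assumptions implicit in the paper's setup these issues do not create real obstacles.
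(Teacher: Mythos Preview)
Your proof is correct and follows the standard textbook derivation of Myerson's lemma: the payment identity from truthfulness, then Fubini on the inner integral, then the algebraic rewrite of $1-F_i(z)$ to recover an expectation against $f_i$. Note that the paper does not supply its own proof of this statement --- it is cited directly from \cite{myerson81} as a known result --- so there is nothing to compare against beyond observing that your argument is the classical one.
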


The importance of this lemma is that it reduces the problem
of revenue maximization to that of virtual welfare
maximization. More specifically, consider a sequence of
distributions $F_1, \ldots, F_n$ which have the property that all $\phi_i$
are monotone non-decreasing (such distributions are called
{\em regular}). In this case, the allocation rule that
chooses a set of bidders with the maximum total virtual
value (subject to feasibility constraints) is monotone (a
consequence of the regularity condition) and thus
implementable. We'll frequently denote this
revenue-maximizing mechanism by MyerOPT.

More precisely, the MyerOPT mechanism works as follows:
\begin{itemize}
\item Collect bids $b_i$ from every bidder $i \in [n]$.
\item Compute $\phi_i(b_i)$ and discard all bidders whose
virtual valuation is negative.
\item Solve the optimization problem $S^* = \argmax_{ S \in I }
\sum_{i \in S} \phi_i(b_i)$.
\item Allocate the items to $S^*$ and charge each bidder $i
\in S^*$ their critical bid.
\end{itemize}

Handling non-regular distributions is possible using the
standard technique of {\em ironing}. Very briefly, it works
as follows. So far, we've been expressing $x, p$ and $\phi$
as a function of the random vector $\vec v$. It is
convenient to switch to the quantile space and express them
as a function of a vector $\vec q \in [0, 1]^n$ where for a
given sample $z$ from $F_i$ we let $q_i = \Pr_{b_i \sim F_i}
[ b_i \geq z ]$. Another way to think of this is, instead of
sampling values, we sample quantiles $q_i$ distributed
uniformly at random in the interval $[0, 1]$ which are then
transformed into values $v_i(q_i) =
F_i^{-1}(1-q_i)$\footnote{In general, $v_i(q_i) = \min \left
\{ v \mid F_i(v) \geq q_i \right\}$.}. Let $R_i(q_i) = q_i
\cdot v_i(q_i)$ and notice that $\phi_i(v_i(q_i)) = \left.
\tfrac{dR_i}{dq} \right|_{q = q_i}$. Now, since $v_i(\cdot)$
is a non-increasing function we have that $\phi_i(\cdot)$ is
monotone if and only if $R$ is concave.

Now, suppose that $F_i$ is such that $R_i$ is not concave.
One can consider the concave hull of $\overline{R_i}$ of
$R_i$ which replaces $R_i$ with a straight line in every
interval that $R_i$ was not following that concave hull. The
corresponding function $\overline{\phi_i}( \cdot ) =
\tfrac{d\overline R_i}{dq}$ is called {\em ironed virtual
value function}.

\begin{lemma}[{\cite[Theorem 3.18]{HartlineBook}}]
For any monotone allocation rule $x$ and any virtual value
function $\phi_i$ of bidder $i$, the expected virtual
welfare of $i$ is
upper bounded by their expected ironed virtual value
welfare.
\[
\Exp{ \phi_i(v_i(q_i)) \cdot x_i( v_i(q_i),
\vec v_{-i}(\vec q) ) }
\leq
\Exp{ \overline{\phi_i}(v_i(q_i)) \cdot x_i( v_i(q_i),
\vec v_{-i}(\vec q) ) }
\]
Furthermore, the inequality holds with equality if the
allocation rule $x$ is such that for all bidders $i$,
$x_i'(q) = 0$ whenever $\overline{R_i}(q) > R_i(q)$.
\end{lemma}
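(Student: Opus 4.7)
The plan is to fix all the other bidders' quantiles $\vec q_{-i}$ and reduce the statement to a one-dimensional inequality about $q_i$; the full result then follows by taking expectation over $\vec q_{-i}$. After conditioning, write $x_i(q) = x_i(v_i(q),\vec v_{-i}(\vec q_{-i}))$, viewed as a function of $q \in [0,1]$ alone. Because $x$ is a monotone allocation rule and $v_i(\cdot)$ is non-increasing, $x_i(q)$ is non-increasing in $q$, so $x_i'(q) \le 0$ (distributionally).

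Next, I would use the identities $\phi_i(v_i(q)) = R_i'(q)$ and $\overline{\phi_i}(v_i(q)) = \overline{R_i}'(q)$ to rewrite the conditional expectations as Stieltjes integrals
\[
\int_0^1 R_i'(q)\, x_i(q)\, dq \quad \text{and} \quad \int_0^1 \overline{R_i}'(q)\, x_i(q)\, dq,
\]
and then integrate by parts on the difference to obtain
\[
\int_0^1 \bigl[\overline{R_i}'(q) - R_i'(q)\bigr] x_i(q)\, dq \;=\; \Bigl[(\overline{R_i}-R_i)(q)\, x_i(q)\Bigr]_0^1 \;-\; \int_0^1 (\overline{R_i}-R_i)(q)\, x_i'(q)\, dq.
\]
The boundary terms vanish: at $q=0$ both $R_i$ and $\overline{R_i}$ equal $0$, and at $q=1$ the concave hull coincides with $R_i$ because $q=1$ is an endpoint of $[0,1]$ and therefore an extreme point of the hypograph of $R_i$ on which the hull is tight. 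Thus the difference equals $-\int_0^1 (\overline{R_i}-R_i)(q)\, x_i'(q)\, dq$.

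Both factors in this last integrand have a definite sign: $\overline{R_i}-R_i \ge 0$ by definition of the concave hull, and $-x_i'(q) \ge 0$ by the monotonicity argument above. Hence the integral is non-negative, which yields the inequality once we take expectation over $\vec q_{-i}$. For the equality condition, note that a non-negative integrand has zero integral only if it vanishes almost everywhere; since $\overline{R_i}(q) - R_i(q) > 0$ exactly on the union of intervals where ironing strictly raises $R_i$, the integral is zero precisely when $x_i'(q)=0$ on those intervals, which is exactly the stated sufficient condition.

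The main technical obstacle is justifying the integration by parts and the vanishing of boundary terms when $F_i$ has unbounded support or atoms, because then $v_i(q)$ may tend to $+\infty$ as $q \to 0^+$ and $R_i$ need not be differentiable everywhere. I would handle this by working with $R_i$ and $\overline{R_i}$ as absolutely continuous (in fact, $\overline{R_i}$ is concave, hence locally Lipschitz on $(0,1)$), applying the integration-by-parts formula for BV/absolutely continuous functions, and using the fact that $q R_i(q) \to 0$ as $q \to 0^+$ under the standard mild assumption that $\expect[v_i] < \infty$ (which is necessary for the expected revenue to be defined at all); a routine truncation argument extends the result to the general case.
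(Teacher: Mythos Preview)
The paper does not supply its own proof of this lemma; it is quoted without proof from Hartline's textbook. Your argument---fix $\vec q_{-i}$, rewrite both sides as $\int_0^1 R_i'(q)\,x_i(q)\,dq$ and $\int_0^1 \overline{R_i}'(q)\,x_i(q)\,dq$, integrate the difference by parts, use that $\overline{R_i}=R_i$ at the endpoints $q\in\{0,1\}$, and conclude from $\overline{R_i}\ge R_i$ together with $x_i'(q)\le 0$---is precisely the standard proof that appears in the cited reference, and it is correct, including your reading of the equality case. The caveats you raise about atoms and unbounded support (handling $R_i(0^+)$ via $\expect[v_i]<\infty$ and treating $x_i$ as BV) are the right technical footnotes, though the paper itself, working at the level of exposition, suppresses them.
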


As a consequence, consider the monotone allocation rule
which allocates to a feasible set of maximum total ironed virtual
value. On the intervals where $\overline{R_i}(q) > R_i(q)$,
$\overline{R_i}$ is linear as part of the concave hull so
the ironed virtual value function, being a derivative of a
linear function, is a constant. Therefore, the allocation
rule is not affected when $q$ ranges in such an interval.

A crucial property of any (ironed) virtual value function
$\phi$ corresponding to a distribution $F$ is that
$z \geq \phi(z)$ for all $z$ in the support of $F$.
This is obvious for $\phi$ as defined in \Cref{lem:myerson}.
We claim it also holds for ironed virtual value functions:
if $z$ lies in an interval where $\overline{\phi} = \phi$ it
holds trivially. Otherwise, if $z \in [a, b]$ for some
interval where $\phi$ needed ironing (i.e.~$\overline{R}(q)
> R(q)$ in the quantile space), we have: $z \geq a \geq
\phi(a) = \overline{\phi}(a) = \overline{\phi}(z)$. We've
thus proven:

\begin{proposition}
\label{prop:phi-inequality}
Any (possibly non-regular) distribution $F$ having an
ironed virtual value function $\overline{\phi}$ satisfies:
\[ z \geq \overline{\phi}(z) \]
for any $z$ in the support of $F$.
\end{proposition}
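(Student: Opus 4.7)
The plan is to argue by cases according to whether the ironing procedure is active at the point $z$, reducing the ironed case to the non-ironed case via monotonicity of the value-to-virtual-value map along an ironing interval.

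First I would handle the non-ironed case. When $\overline{\phi}(z) = \phi(z)$, the inequality reduces to the trivial observation that for $\phi(z) = z - (1-F(z))/f(z)$, the subtracted quantity $(1-F(z))/f(z)$ is nonnegative (since $F$ is a CDF and $f$ a density), so $\phi(z) \le z$ directly from the definition in \Cref{lem:myerson}.

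Next I would deal with the ironed case: $z$ lies in some maximal interval $[a,b]$ on which $\overline{R} > R$ strictly in the interior (in quantile space). By construction of the concave hull $\overline{R}$ of $R$, $\overline{R}$ is affine on $[a,b]$ and coincides with $R$ at the endpoints $a$ and $b$. Consequently $\overline{\phi} = d\overline{R}/dq$ is constant on $[a,b]$, and at the left endpoint $a$ we have $\overline{\phi}(a) = \phi(a)$ (the derivative of the chord equals the derivative of $R$ at the point where the chord leaves the graph). Translating back from quantile space to value space: since $v(\cdot)$ is non-increasing in the quantile, the left quantile endpoint corresponds to the right value endpoint of the ironed interval, and vice versa. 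Thus the value $z$ lies in a closed value-interval whose right endpoint is the image of $a$, and $z$ is bounded above by this endpoint. Chaining the (in)equalities: $\overline{\phi}(z) = \overline{\phi}(a) = \phi(a) \le a \le z$, where the second inequality is the non-ironed case applied at $a$.

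The main obstacle is purely bookkeeping: keeping track of the orientation flip between quantile space and value space (a small quantile corresponds to a large value, so the ``left endpoint $a$'' in quantile space is the ``upper endpoint'' in value space) so that the inequality $a \le z$ above actually comes out in the correct direction. Once this orientation is fixed, everything else is immediate from the definition of the concave hull and \Cref{lem:myerson}.
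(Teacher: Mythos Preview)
Your approach is the same as the paper's in spirit, but the final chain has the orientation reversed --- precisely at the point you flagged as the main obstacle. You work with the \emph{left quantile} endpoint $a$, which (as you correctly observe) corresponds to the \emph{upper value} endpoint $v(a)$, so that $z \le v(a)$. But your concluding chain ``$\overline{\phi}(z) = \overline{\phi}(a) = \phi(a) \le a \le z$'' needs the value endpoint that lies \emph{below} $z$, not above it. With the left quantile endpoint you only obtain $\overline{\phi}(z) \le v(a)$ together with $z \le v(a)$, which does not close.

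The fix is to use the other endpoint: the \emph{right quantile} endpoint $b$, equivalently the \emph{lower value} endpoint $v(b)$. Tangency of the chord to $R$ holds at both endpoints, so $\overline{\phi}(v(b)) = \phi(v(b))$ as well, and now
\[
  \overline{\phi}(z) \;=\; \overline{\phi}(v(b)) \;=\; \phi(v(b)) \;\le\; v(b) \;\le\; z .
\]
The paper avoids the orientation bookkeeping altogether by naming the ironed interval $[a,b]$ directly in \emph{value} space and using its left endpoint $a$, giving the one-line chain $z \ge a \ge \phi(a) = \overline{\phi}(a) = \overline{\phi}(z)$.
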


\begin{remark}
For simplicity, in the remainder of the paper we'll use
$\phi$ and $\overline{\phi}$ interchangeably and we will refer to
$\phi$ as virtual value function. The reader should keep in
mind that if the associated distribution is non-regular,
then {\em ironed} virtual value functions should be used
instead.
\end{remark}

\section{Revenue Monotonicity on Matroid Markets}
\label{sec:matroid-monotonicity}

We extend the standard single-parameter environment to allow
for bidders with misspecified distributions. Formally, the
$n$ bidders are partitioned into sets
\rdkdelete{$G \cup R = \{1,
\ldots, k\} \cup \{k + 1, \ldots, n\}$}
$G$ and $R$; the former are
called {\em green} and the latter {\em red}.
The color of
each bidder (green or red) is not revealed to the mechanism designer at
any point. Green bidders sample their values from their
respective distribution $F_i$ but red bidders are
sampling $v_i \sim F_i'$ for some $\{F_i'\}_{i \in R}$
which are completely unknown to the mechanism designer
and can be adversarially chosen.

In this section we are interested in studying the
behavior of Myerson's optimal mechanism when designed under
the (wrong) assumption that $v_i \sim F_i$ for all $i \in
[n]$. Specifically, we ask the question of whether the
existence of the red bidders could harm the expected revenue
of the seller compared to the case where the seller was able
to identify and exclude the red bidders, thus designing the
optimal mechanism for the green bidders alone. The following
definition makes this notion of revenue monotonicity more
precise.

\begin{definition}[RMMB]
\label{def:rmmb}
Consider a single-parameter, downward-closed market $\Mat =
(E, \Ind)$ of $|E| = n$ bidders.
A mechanism $\A$ is {\em Revenue Monotone under
Misspecified Bidders (RMMB)} if for any \madelete{regular}
distributions $F_1, \ldots, F_n$, any number $1 \leq k \leq n$ of
green bidders and any fixed misspecified bids $\vec{b}'_R \in \Rea^R$
of the red bidders:

\begin{equation}
\label{ineq:rmmb}
\Exp{ \Rev(
\A(b_G, b_R) ) } \geq
\Exp{ \Rev( \A(b_G)) }
\end{equation}

\noindent
where both expectation are taken over $b_G \sim
\prod_{i \in G} F_i$.
\end{definition}

An alternative definition of the revenue monotonicity property
allows red bidders to have stochastic valuations drawn from
distributions $F_i' \neq F_i$ instead of fixed bids. We note
that the two definitions are equivalent: if $\mathcal{A}$ is
RMMB according to \Cref{def:rmmb} then inequality
(\ref{ineq:rmmb}) holds pointwise for any fixed misspecified
bids and thus would also hold in expectation.  For the other
direction, if inequality (\ref{ineq:rmmb}) holds in
expectation over the red bids, regardless of the choice
of distributions $\{F_i' \mid i \in R\}$ then we may specialize
to the case when each $F_i'$ is a point-mass distribution with a
single support point $b_i$ for each $i \in R$, and then
\Cref{def:rmmb} follows.

In what follows we assume bidders always submit bids that
fall within the support of their respective distribution.
Green bidders obviously follow that rule and red bidders
should do as well, otherwise the mechanism could recognize
they are red and just ignore them.

Consider first the simpler case of selling a single item.
This corresponds to a uniform rank $1$ matroid market.
Intuitively when the item is allocated to a green bidder,
the existence of the red bidders is not problematic and in
fact could help increase the critical bid and thus the
payment of the winner.  On the other hand, when a red bidder
wins one has to prove that they are not charged too little
and thus risk bringing the expected revenue down.

Let $m = \max( \max_{i \in G} \phi_i(b_i), 0 )$ be the
random variable denoting the highest non-negative virtual
value in the set of green bidders. Let also $X$ be an
indicator a random variable which is 1
if the winner belongs to $G$ and $Y$
denote an indicator random variable which is 1 if the winner
belongs to $R$. For the mechanism MyerOPT have:
\begin{align}
\label{sec23:eq1}
\Exp{ \text{revenue from green bidders} } &= \Exp{m \cdot X}\\
\label{sec23:eq2}
\Exp{ \text{revenue from red bidders} } &\geq \Exp{m \cdot Y}
\end{align}

\noindent
where (\ref{sec23:eq1}) follows from Myerson's lemma and
(\ref{sec23:eq2}) follows from the observation that the
winner of the optimal auction never pays less than the
second-highest virtual value. To see why the latter holds,
let $\phi_s$ be the second highest
virtual value, $r$ be the red winner and $g$ is the green player
with the highest virtual value.
The critical bid of the red winner is at least
$\phi_r^{-1} (\phi_s)
\geq \phi_r^{-1} (\phi_g(b_g))
\geq \phi_g(b_g)$ where we applied the fact that $x
\geq \phi(x)$ to the virtual value function
$\phi = \phi_r$ and the value $x = \phi_r^{-1} (\phi_g(b_g))$.

Summing (\ref{sec23:eq1}) and (\ref{sec23:eq2}) and
using the fact that $X + Y = 1$ whenever $m > 0$, we find:

\begin{align*}
\Exp{\text{Revenue from all bidders } 1, \dots, n}
    &\geq \Exp{m \cdot X} + \Exp{m \cdot Y}\\
    &= \Exp{m}\\
    &= \Exp{\text{revenue of MyerOPT on } G}
\end{align*}

We therefore concluded that Myerson's optimal mechanism is
RMMB in the single-item case. We are now ready to
generalize the above idea to any matroid market.

\begin{theorem}
\label{thm:rmmb-matroid}
Let $\Mat = (E, \Ind)$ be any matroid market. Then
$\MyerOPT$ in $\Mat$ is RMMB
\end{theorem}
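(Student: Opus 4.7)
The plan is to establish the stronger pointwise statement: for every realization of the green bids $\vec{b}_G$ and every fixed choice of red bids $\vec{b}_R$, the revenue collected by $\MyerOPT$ on $G \cup R$ is at least the virtual welfare of $\MyerOPT$ on $G$ alone. Averaging such an inequality over $\vec{b}_G$ yields \Cref{ineq:rmmb}, since Myerson's lemma converts every green bidder's expected payment in either mechanism into his expected virtual value (his distribution is correctly specified, so the lemma applies directly). Let $S^*$ and $T^*$ denote the max-weight independent sets produced by $\MyerOPT$ on $G \cup R$ and on $G$, with weights $\phi_i(b_i)$ and negative-virtual-value elements discarded. Writing $A = T^* \setminus S^*$, $B = (S^* \cap G) \setminus T^*$, $C = S^* \cap R$, and letting $p_r$ be the Myerson payment of red winner $r \in C$, cancelling the shared contribution of $S^* \cap T^*$ reduces the goal to
\begin{equation*}
\sum_{i \in B} \phi_i(b_i) \;+\; \sum_{r \in C} p_r \;\geq\; \sum_{x \in A} \phi_x(b_x).
\end{equation*}

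To prove this I would construct an injection $\sigma : A \to B \cup C$ each of whose matched partners dominates $\phi_x(b_x)$ in the appropriate sense. First, $|T^*| \leq |S^*|$ by matroid rank monotonicity on the restricted ground set $E^+ = \{i : \phi_i(b_i) > 0\}$. Augmenting $T^*$ with elements of $S^* \setminus T^*$ via the exchange axiom produces a basis $T^{**}$ of $\Mat_{\mid E^+}$ with $T^{**} \setminus S^* = A$. Applying Brualdi's basis exchange theorem to the pair $(S^*, T^{**})$ in $\Mat_{\mid E^+}$ and inverting the resulting bijection yields $\sigma : A \to S^* \setminus T^{**} \subseteq B \cup C$ with $S^* - \sigma(x) + x \in \Ind$ for every $x \in A$. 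When $\sigma(x) \in B$, the optimality of $S^*$ immediately forces $\phi_{\sigma(x)}(b_{\sigma(x)}) \geq \phi_x(b_x)$.

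The genuinely new step is proving the analogous bound $p_r \geq \phi_x(b_x)$ when $\sigma(x) = r \in C$ is red. Let $S^*_{-r}$ denote the max-weight independent set of $\Mat_{\mid E^+ - r}$. Since $\sigma(x) = r$ forces $S^*_{-r} + x$ to be dependent (otherwise $S^*_{-r} + x$ would be a strictly heavier independent set than $S^*_{-r}$), \Cref{prop:matroid-update} implies that $r$ enters $S^*$ by displacing the minimum-virtual-value element $y_r$ in the fundamental circuit of $r$ with respect to $S^*_{-r}$. The threshold virtual value for $r$ to be a winner is therefore $\phi_{y_r}(b_{y_r})$, and \Cref{prop:phi-inequality} upgrades this to $p_r \geq \phi_{y_r}(b_{y_r})$. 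The identity $S^* - r + x = S^*_{-r} + x - y_r$ being independent forces $y_r$ to lie inside the fundamental circuit of $x$ with respect to $S^*_{-r}$, and the standard max-weight-basis property of fundamental circuits gives $\phi_{y_r}(b_{y_r}) \geq \phi_x(b_x)$ --- otherwise swapping a lighter circuit-element out and $x$ in would beat $S^*_{-r}$. Combining yields $p_r \geq \phi_x(b_x)$. Summing over $x \in A$ and using non-negativity of $\phi$-values on $B \setminus \sigma(A)$ and of $p_r$ on $C \setminus \sigma(A)$ completes the pointwise bound.

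The main obstacle is precisely this last argument: simultaneously controlling the displaced element $y_r$ via two different fundamental circuits, the one for $r$ and the one for its matched partner $x$. This is where matroid structure is indispensable, and it foreshadows \Cref{thm:rmmb-non-matroid}, whose construction sabotages exactly this matching by letting several red winners share a single displaced green bidder, so that their critical bids can each shrink to nearly zero.
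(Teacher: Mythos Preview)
Your argument is correct and close in spirit to the paper's, though the machinery differs. Two small remarks first. The literal ``pointwise statement'' you announce --- that the \emph{revenue} of $\MyerOPT$ on $G\cup R$ is at least the virtual welfare of $T^*$ --- is false as stated (already with two green bidders and an irrelevant red bidder in a single-item auction, the winner's critical bid can fall below her virtual value). What you actually prove, and what suffices, is the hybrid inequality
\[
\sum_{g\in S^*\cap G}\phi_g(b_g)+\sum_{r\in C}p_r \;\ge\; \sum_{g\in T^*}\phi_g(b_g),
\]
after which Myerson's lemma converts the green virtual-value sums on both sides into expected green payments, exactly as your second sentence indicates. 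Second, the set $B=(S^*\cap G)\setminus T^*$ is always empty: by the greedy characterization, a green $g\in S^*$ is not spanned by the heavier elements of $E^+$, hence a fortiori not spanned by the heavier \emph{green} elements, so $g\in T^*$. Your case $\sigma(x)\in B$ is therefore vacuous, though harmless.

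With those caveats, your Brualdi-based injection and the two-fundamental-circuit argument for $p_r\ge\phi_x(b_x)$ are sound. The paper takes a more direct route to the same matching: it starts from $T^*$ and inserts the red winners $S^*\cap R$ one at a time via \Cref{prop:matroid-update}. Each insertion either creates no circuit (payment $\ge 0$) or evicts the minimum-virtual-value element $f$ of the unique circuit formed; since every inserted red lies in the final optimum $S^*$, it is never itself the circuit minimum, so $f$ is automatically green and lies in $T^*\setminus S^*$. The red's critical bid is then at least $\phi_r^{-1}(\phi_f(b_f))\ge\phi_f(b_f)$, giving the same per-element domination you obtain. This avoids Brualdi, the auxiliary basis $T^{**}$, and the delicate step of showing that $y_r$ sits in the fundamental circuit of its matched partner $x$. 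Your route is a legitimate alternative that makes the injection explicit up front; the paper's sequential construction makes the ``displaced element is green'' step immediate.
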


\begin{proof}
Call $G$ the set of green bidders and $R$
the set of red bidders.
Let $(x,p)$ denote the allocation and payment
rules for the mechanism $\MyerOPT$ that runs
Myerson's optimal mechanism on all $n$ bidders,
using the given distribution of each.
Let $(x',p')$ denote the allocation and payment
rules for the mechanism $\MyerOPT_G$ that runs
Myerson's optimal mechanism in the bidder set $G$ only.
For a set $S
\subseteq [n]$, let $T_S$ be the random variable
denoting the independent subset of $S$ that maximizes
the sum of ironed virtual values. In other words,
$T_S$ is the set of winners chosen by Myerson's
optimal mechanism on bidder set $S$.

By Myerson's Lemma, the revenue of
$\MyerOPT_G$ satisfies:

\begin{equation}
\label{eq0}
\Exp{\sum_{i \in G} p'_i(\vec b)} =
\Exp{ \sum_{i \in G} x'_i(\vec b) \cdot \phi_i(b_i) }
\end{equation}

By linearity
of expectation, we can
break up the expected revenue of
$\MyerOPT$ into two terms as follows:

\begin{equation}
\label{eq1}
\Exp{ \sum_{i \in [n]} p_i(\vec b) } = \Exp{ \sum_{i \in G} p_i(\vec b) } +
\Exp{ \sum_{i \in R} p_i(\vec b) }
\end{equation}

The first term on the right side of~\eqref{eq1}
expresses the revenue original from the green bidders.
Using Myerson's Lemma, we can equate this revenue
with the expectation of the
green winners' combined virtual value:

\begin{equation}
\label{eq2}
\Exp{ \sum_{i \in G} p_i(\vec b) } =
\Exp{ \sum_{i \in G} x_i(\vec b) \cdot \phi_i(b_i) } .
\end{equation}

To express the revenue coming from the red bidders in terms
of virtual valuations, we provide the argument that follows.
One way to derive $T_{G+R}$ from $T_G$ is to start with
$T_G$ and
sequentially add elements of $T_{G+R} \cap R$ in arbitrary
order while removing at each step the least weight element in the
circuit that potentially forms (repeated application of
\Cref{prop:matroid-update}). Let $e$ be the
$i$-th red element we're adding. If no circuit forms after
the addition, then
$e$ pays the smallest value in its support which is a
non-negative quantity. Otherwise, let $C$ be the
unique circuit that forms after that addition. Let $f$ be the
minimum weight element in $C$ and let $b_f$ be the associated
bid made by player $f$. Notice that $f$ must be green;
by assumption, every red element we're adding is part of the
eventual optimal solution so it cannot be removed at any
stage of this process.
The price charged to $e$ is their critical bid which we
claim is at least $\phi_e^{-1}(\phi_f(b_f))$. The reason is
that $e$ is part of circuit $C$ and $f$ is the min-weight
element of that circuit. The min-weight element of a circuit
is never in the max-weight independent set\footnote{This is a
consequence of the optimality of the {\sc Greedy} algorithm
since the min-weight element of a circuit is the last to be
considered among the elements of the circuit and its
inclusion will violate independence.} so if bidder $e$ bids
any value $v$ such that $\phi_e(v) < \phi_f(b_f)$ they will
certainly {\em not} be included in the set of winners, $T_{G+R}$.
By \Cref{prop:phi-inequality} it follows that $\phi_e^{-1}(
\phi_f(b_f) ) \geq \phi_f(b_f)$ thus $p_e(\vec b) \geq
\phi_f(b_f)$.

The above reasoning allows us ``charge'' each red bidder's
payment to a green player's virtual value in $T_G \setminus
T_{G+R}$:

\begin{align} \nonumber
\Exp{ \sum_{i \in R} p_i(\vec b) }
& \geq \Exp{ \sum_{i \in T_G \setminus T_{G+R}} \phi_i(b_i)} \\
& = \Exp{ \sum_{i \in G} (x'_i(\vec b) - x_i(\vec b)) \cdot \phi_i(b_i) }
\label{eq3}
\end{align}
The second line is justified by observing that for $i \in G$,
$x'_i(\vec b) = x_i(\vec b)$ unless $i \in T_G \setminus T_{G+R}$,
in which case $x'_i(\vec b) - x_i(\vec b) = 1$.

Combining Equations (\ref{eq0})-(\ref{eq3}) we get:

\begin{align*}
\Exp{ \sum_{i \in [n]} p_i(\vec b) } & \geq
    \Exp{ \sum_{ i \in G  } x_i(\vec b) \cdot \phi_i (b_i) }
    + \Exp{ \sum_{i \in G} (x'_i(\vec b) - x_i(\vec b)) \cdot \phi_i(b_i)} \\
    & = \Exp{ \sum_{i \in G } x'_i(\vec b) \cdot \phi_i (b_i) }
    = \Exp{ \sum_{i \in G} p'_i(\vec b) }
\end{align*}
In other words, the expected revenue of $\MyerOPT$
is greater than or equal to that of $\MyerOPT_G$.

\end{proof}

\section{General Downward-Closed Markets}

When the market is not a matroid, the existence of red
bidders can do a lot of damage to the revenue of the
mechanism as shown in the following simple example.

\begin{example}
\label{ex:nonmatroid}
Consider a 3-element downward-closed set system on
$E = \{a, b, c\}$ with maximal feasible sets: $\{a, b\}$ and $\{c\}$.
Let $c$ be a green bidder with a deterministic value of $1$
and $a, b$ be red bidders each with a specified value distribution
given by the following cumulative distribution function $F(x) = 1 - (1
+ x)^{1-N}$ for some parameter $N$.
Note that the associated virtual value function is:

\[
  \phi(x) = x - \frac{1-F(x)}{f(x)}
    = x - \frac{(1+x)^{1-N}}{(N-1) (1+x)^{-N}}
    = x - \tfrac{1+x}{N-1} = \left( 1 - \tfrac{1}{N-1} \right) x
    - \tfrac{1}{N-1} .
\]

For this virtual value function we have
$\phi^{-1}(0) = \frac{1}{N-2}$,
$\phi^{-1}(1) = \frac{N}{N-2}$.

Consider the revenue of Myerson's mechanism when
the red bidders, instead of following their
specified distribution, they each bid $\phi^{-1}(1)$ --- and
the green bidder bids $1$, the only support point of their
distribution. The set $\{a, b\}$ wins over $\{c\}$ since the
former sums to a total virtual value of $2$ over the latter's
virtual value $1$ so bidders $a, b$ pay their critical bid.

To compute that, notice that each of the bidders $a, b$
could unilaterally decrease their bid to any $\varepsilon >
\tfrac{1}{N-2}$ and they would still win the auction since
the set $\{a, b\}$ would still have a total virtual value greater
than $1$. Therefore, each of $a, b$ pays
$\tfrac{1}{N-2}$ for a total revenue of $\tfrac{2}{N-2}$.

On the other hand, the same mechanism when run on the set
$\{c\}$ of only the green bidder, always allocates an item
to $c$ and collects a total revenue of $1$.

Letting $N \rightarrow \infty$ we see that the former
revenue tends to zero while the latter remains $1$,
violating the revenue monotonicity property of
\Cref{def:rmmb} by an unbounded multiplicative factor.
\end{example}

To generalize the above idea to any non-matroid set system
we need the following lemma.

\begin{lemma}
\label{lem:nonmatroid}
A downward-closed set system $\mathcal{S} = (E, \Ind)$ is {\em not}
a matroid if and only if there exist $I, J \in \Ind$ with the following
properties:
\begin{enumerate}
    \item \label{lem:nonmatroid:propK}
        For every $K \in \Ind|_{I \cup J}$, if $|K| \geq |I|$
        then $K \supseteq I \backslash J$.
    \item \label{lem:nonmatroid:propNonEmpty}
        $|J \backslash I| \geq 1$.
    \item \label{lem:nonmatroid:propMaximum}
        $I$ is a maximum cardinality element of $\Ind|_{I \cup J}$.
\end{enumerate}
\end{lemma}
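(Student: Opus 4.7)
The plan is to prove both directions of the biconditional, with the forward direction relying on a lexicographic minimization over pairs that witness the failure of the exchange axiom.

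For the reverse direction, I assume $I,J \in \Ind$ satisfy the three properties and derive a contradiction from supposing $\mathcal{S}$ is a matroid. By property 2, $|I \cap J| < |J|$. If (I2) held, I could start from $J$ and iteratively apply the exchange axiom using $I$, producing an independent set $K$ with $J \subseteq K \subseteq I \cup J$ and $|K| = |I|$. Then $|K \setminus J| = |I| - |J|$, while $|I \setminus J| = |I| - |I \cap J| > |I| - |J|$, so $K$ omits some element of $I \setminus J$. This contradicts property 1, completing this direction.

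For the forward direction, suppose $\mathcal{S}$ is not a matroid, so there exist $A, B \in \Ind$ with $|A| > |B|$ such that no $x \in A \setminus B$ satisfies $B + x \in \Ind$. Among all such pairs I choose $(I,J)$ so that $|I \cup J|$ is minimum and, subject to this, $|I|$ is maximum. Property 2 is immediate: if $J \subseteq I$, then for any $x \in I \setminus J$ we have $J + x \subseteq I \in \Ind$, so by (I1) the exchange is satisfied, contradicting the choice of $(I,J)$. For property 3, I suppose some $K \in \Ind|_{I \cup J}$ has $|K| > |I|$ and derive a contradiction: any witness $x \in K \setminus J$ for exchange between $K$ and $J$ would lie in $(I \cup J) \setminus J = I \setminus J$ and hence would also witness exchange between $I$ and $J$, so $(K,J)$ also fails exchange; minimality of $|I \cup J|$ forces $K \cup J = I \cup J$, and then $|K| > |I|$ violates the maximality of $|I|$. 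Property 1 follows by the same pattern: for $K \in \Ind|_{I \cup J}$ with $|K| \geq |I| > |J|$, the pair $(K,J)$ inherits the failure of exchange, minimality of $|I \cup J|$ forces $K \cup J = I \cup J$, and therefore $K \supseteq I \setminus J$.

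The main obstacle is picking the right extremal pair. A single minimization is not enough: minimizing only $|I \cup J|$ yields property 1 but not property 3, because there might be an even larger independent subset of $I \cup J$ than $I$ itself. The second criterion of maximizing $|I|$ (subject to the first) is exactly what rules this out, since any larger candidate $K$ inside $\Ind|_{I \cup J}$ would produce a new witness $(K,J)$ with the same $|I \cup J|$ but strictly bigger first coordinate. Once this lexicographic choice is in place, the three properties fall out from essentially the same ``no-exchange inherits'' argument, applied in slightly different regimes.
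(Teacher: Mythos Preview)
Your proof is correct and follows essentially the same strategy as the paper: pick an extremal pair $(I,J)$ witnessing the failure of the exchange axiom and read off the three properties from extremality, while the reverse direction builds a $K \supseteq J$ of size $|I|$ inside $I\cup J$ that violates property~1. The one noteworthy variation is in how property~3 is obtained: the paper minimizes only $|I\cup J|$ and then \emph{derives} property~3 from property~1 by a removal argument (if some $I'$ had $|I'|>|I|$ then $I'\supseteq I\setminus J$, and peeling off elements of $I\setminus J$ one at a time produces a set of size exactly $|I|$ that no longer contains $I\setminus J$), whereas you add a secondary maximization of $|I|$ so that property~3 is immediate. Both work; your version trades the removal trick for one extra extremal criterion. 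A tiny point you leave implicit in the reverse direction is that $|I|\geq |J|$, which is needed to grow $J$ up to size $|I|$; this follows at once from property~3.
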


\begin{proof}
For the forward direction, suppose $\mathcal{S}$ is {\em
not} a matroid and let $V$ be a minimum-cardinality subset
of $E$ that is not a matroid.  Since $\Ind|_V$ is
downward-closed and non-empty, it must violate the exchange
axiom. Hence, there exist sets $I,J \in \Ind|_V$ such that
$|I| > |J|$ but $J+x \not\in \Ind$ for all $x \in I
\backslash J$. Note
that $V = I \cup J$, since otherwise $I \cup J$ is a
strictly smaller subset of $E$ satisfying the property that
$(I \cup J, \Ind|_{I \cup J})$ is not a matroid.

Observe that $J$ is a maximal element of $\Ind_V$. The
reason is that $V = I \cup J$, so every element of $V
\backslash J$ belongs to $I$. By our assumption on the pair
$I,J$, there is no element $y \in I$ such that $J + y \in
\Ind|_V$. Since $\Ind|_V$ is downward-closed, it follows that
no strict superset of $J$ belongs to $\Ind|_V$.

We now proceed to prove that $I, J$ satisfy the required
properties of the lemma:

\begin{enumerate}
\item[(\ref{lem:nonmatroid:propK})]
Let $K \in \Ind_V$ with $|K| \geq |I|$.  It follows that
$|K| > |J|$, but $J$ is maximal in $\Ind|_V$, so $K$ and $J$
must violate the exchange axiom.  Thus, $\Ind|_{K \cup J}$ is
not a matroid.  By the minimality of $V$, this implies $K
\cup J = V$ hence $K \supseteq I \backslash J$.

\item[(\ref{lem:nonmatroid:propNonEmpty})]
If $J \backslash I = \emptyset$ then $J \subseteq I$ which
contradicts the fact that $I, J$ violate the exchange axiom.

\item[(\ref{lem:nonmatroid:propMaximum})]
Suppose there exists $I' \in \Ind|_V$ with $|I'| > |I|$, then
by property (\ref{lem:nonmatroid:propK}) we have $I'
\supseteq I \backslash J$.  Remove elements of $I \backslash
J$ from $I'$ one by one, in arbitrary order, until we reach
a set $K \in \Ind_V$ such that $|K| = |I|$. This is possible
because after the entire set $I \backslash J$ is removed
from $I'$, what remains is a subset of $J$, hence has
strictly fewer elements than $I$. The set $K$ thus
constructed has $|K| = |I|$ but $K \not\supseteq I
\backslash J$, violating property
(\ref{lem:nonmatroid:propK}).
\end{enumerate}

For the ``only if'' direction, supposing that $\mathcal{S}$
is a matroid, we must show that no $I,J \in \Ind$ satisfy
all three properties.  To this end, suppose $I$ and $J$
satisfy (\ref{lem:nonmatroid:propNonEmpty}) and
(\ref{lem:nonmatroid:propMaximum}). Since $\mathcal{S}|_{I
\cup J}$ is a matroid, there exists $K \supseteq J$ such
that $K \in \Ind|_{I \cup J}$ and $|K| = |I|$.  By property
(\ref{lem:nonmatroid:propNonEmpty}), we know that no
$|I|$-element superset of $J$ contains $I-J$ as a subset.
Therefore, the set $K$ violates property
(\ref{lem:nonmatroid:propK}).
\end{proof}

We are now ready to generalize \Cref{ex:nonmatroid} to every non-matroid set
system.

\begin{theorem}
\label{thm:rmmb-non-matroid}
For any $\Mat = (E, \Ind)$ which is {\em not} a matroid,
MyerOPT is {\em not} RMMB.
\end{theorem}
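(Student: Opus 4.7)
The plan is to generalize Example 1 to an arbitrary non-matroid downward-closed system, using Lemma 2 to extract the combinatorial obstruction. Apply Lemma 2 to $\Mat$ to obtain sets $I,J\in\Ind$ satisfying its three properties, and write $R=I\setminus J$, $G'=J\setminus I$, $C=I\cap J$, $r=|R|$, $g=|G'|$, $k=|C|$. Properties~2 and~3 together imply $r>g\geq 1$ (if $|I|=|J|$, property~1 applied to $K=J$ would force $I\setminus J\subseteq J$, hence $I=J$, contradicting property~2). I declare the bidders in $R$ red and the rest green. Fix a constant $c>g$ and a constant $M>c+g$. The specified distributions are: each red bidder uses the Pareto-like $F(x)=1-(1+x)^{1-N}$ from Example 1, whose virtual value is $\phi(x)=\tfrac{N-2}{N-1}x-\tfrac{1}{N-1}$ with $\phi^{-1}(0)=1/(N-2)$; each $G'$-bidder is deterministic at value $1$; each $C$-bidder is deterministic at $M$; and every bidder outside $I\cup J$ is deterministic at $0$. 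The adversary fixes each red bid to $\phi^{-1}(c)$, so every red bidder contributes virtual value $c$.

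The key structural step is to show that under these bids, $\MyerOPT$ allocates exactly to $I$ and each red bidder's critical bid equals $\phi^{-1}(0)=1/(N-2)$. Bidders outside $I\cup J$ have virtual value $0$ and can be ignored. For $K\in\Ind|_{I\cup J}$, property~3 gives $|K|\leq|I|$ and property~1 says that any $K$ with $|K|=|I|$ contains $R$. I split the analysis into three cases. If $|K|=|I|$ and $K\neq I$, then $K$ swaps $\delta\geq 1$ of the $M$-valued $C$-bidders for $G'$-bidders of value $1$, losing $\delta(M-1)>0$ in virtual welfare. If $|K|<|I|$ and $C\subseteq K$, property~1 forces $|K\cap R|+|K\cap G'|\leq r-1$, and since $c>g\geq 1$ the best such $K$ is $C\cup(R-i)$ for some $i\in R$, of value $Mk+(r-1)c<Mk+rc$. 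If $|K|<|I|$ and $C\not\subseteq K$, losing an $M$-valued bidder costs more than the at-most-$g$ gain from greens, since $M>c+g$. Hence $I$ is the unique virtual-welfare-maximizing feasible set. Re-running the same argument with one red bidder $i$ removed identifies $C\cup(R-i)$ as $I$'s best competitor (of value $Mk+(r-1)c$); $i$ therefore joins the winning set precisely when its virtual value exceeds $0$, yielding critical bid $\phi^{-1}(0)=1/(N-2)$.

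Revenue then follows directly. $\MyerOPT$ collects $M$ from each $C$-bidder (its unique supported bid) and $1/(N-2)$ from each red bidder, for total $Mk+r/(N-2)$. Meanwhile, $\MyerOPT_G$, run on the green bidders $[n]\setminus R$ under feasibility $\Ind|_{[n]\setminus R}$, allocates to $J=C\cup G'$ (every positive-virtual-value bidder must be in any optimum, as $J\in\Ind$ and downward closure lets us freely add any missing $C$- or $G'$-bidder) and collects $Mk+g$. Choosing $N$ large enough that $r/(N-2)<g$ gives
\[
\Exp{\Rev(\MyerOPT(\vec b))} \;=\; Mk + \frac{r}{N-2} \;<\; Mk + g \;=\; \Exp{\Rev(\MyerOPT_G(\vec b))},
\]
violating \Cref{def:rmmb}. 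The main technical obstacle is the uniqueness claim for $I$ in the full mechanism: when $C\neq\emptyset$, several maximum-cardinality feasible sets in $\Ind|_{I\cup J}$ can coexist, and ruling them all out requires combining Lemma~\ref{lem:nonmatroid}'s property~1 with the large deterministic value $M$ placed on the $C$-bidders.
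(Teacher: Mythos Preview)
Your proof is correct and follows the same overall strategy as the paper: invoke \Cref{lem:nonmatroid}, declare $R=I\setminus J$ red with the Pareto-type distribution, and compare the revenue on $I$ against the green-only revenue on $J$. The one substantive refinement is that by assigning three distinct virtual values ($M$ to $C$, $c$ to $R$, $1$ to $G'$) you force $I$ to be the \emph{unique} virtual-welfare maximizer, which sidesteps the tie-breaking issue the paper's proof defers to a footnote and the appendix; this makes your argument cleaner and self-contained. Minor remarks: the inequality $r>g$ you derive is correct but never actually used, and the thresholds $c>g$, $M>c+g$ are stronger than necessary ($c>1$ and $M>g$ already suffice for your case analysis), though this does no harm.
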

\begin{proof}
Consider a downward-closed $\Mat = (E, \Ind)$ which is {\em
not} a matroid.  We are going to show there exists a
partition of players into green and red sets and a choice of
valuation distributions and misspecified red bids such that the
RMMB property is violated.

Let $I, J \subseteq E$ be the subsets whose existence is guaranteed
by \Cref{lem:nonmatroid}. Define $G = J$ to be the set of green bidders,
$R = I \backslash J$ to be the set of red bidders. All other bidders are irrelevant
and can be assumed to be bidding zero.
Set the value of each green bidder to be deterministically
equal to 1. For each red bidder $r$,
the specified value distribution has
the same cumulative distribution function
$F(x) = 1 - (1+x)^{1-N}$ defined in \Cref{ex:nonmatroid}.

Now let's consider the expected revenue of Myerson's
mechanism when every bidder in $R$ bids
$\phi^{-1}(1)$.\footnote{Members of $G$ bid as well, but it
hardly matters, because their bid is always 1 --- the only
support point of their value distribution --- so the
auctioneer knows their value without their having to submit
a bid.} Every bidder's virtual value is 1, so the mechanism
will choose any set of winners with maximum cardinality
which, according to \Cref{lem:nonmatroid}, property
(\ref{lem:nonmatroid:propMaximum}), is $|I|$.
For example, the set of winners could be $I$\footnote{In
general, the mechanism might choose any set $W$ of winners
such that $|W| = |I|$. The way to handle this case is
similar to the one used in the proof of \Cref{thm:corrected}
in the Appendix.}.

A consequence of \Cref{lem:nonmatroid}, property
(\ref{lem:nonmatroid:propK}) is that
for
every red bidder $r$ there
is no set of bidders disjoint from $\{r\}$ with
combined  virtual value greater than $|I|-1$.
Thus each red bidder pays $\phi^{-1}(0)$.
Elements of $I \cap J$ correspond to green
bidders who win the auction and pay 1, because
a green bidder pays 1 whenever they win. There are $|I \cap J|$
such bidders. Thus, the Myerson revenue is
$|I \cap J| + \frac{1}{N-2} |I \backslash J|$. The optimal
auction on the green bidders alone charges each
of these bidders a price of 1, receiving revenue
$|J| = |I \cap J| + |J \backslash I|$.
This exceeds $|I \cap J| + \frac{1}{N-2} |I \backslash J|$
as long as
\begin{equation} \label{eq:N-2}
  (N - 2) \cdot |J\backslash I| > |I \backslash J| .
\end{equation}
This inequality is satisfied, for example,
when $N = |I \backslash J| + 3$, because $J \backslash I$ has
at least one element (\Cref{lem:nonmatroid}, property
(\ref{lem:nonmatroid:propNonEmpty})).
\end{proof}

\section{Open Questions}

The previous section concluded with a proof that for any
non-matroid system, the ratio $r =
\frac{\Exp{\text{Rev}(G)}}{\Exp{\text{Rev}(G \cup R)}}$ for
Myerson's optimal mechanism can be greater than $1$.  An
interesting question is whether that ratio can be made
arbitrarily large as in \Cref{ex:nonmatroid}. If the sets
$I, J$ in the above proof are such that $I \cap J =
\emptyset$, then the ratio can be made unbounded with the
same construction. We do not know if possibly another choice
of red/green bidders and their distributions can give an
unbounded ratio for all non-matroid system.

A broader question our work leaves unanswered is whether it's
possible to design other mechanisms (potentially
non-truthful) that, in the presence of
red and green bidders in non-matroid downward-closed market, can always
guarantee a {\em constant} approximation to Myerson's
revenue on the green bidders alone. For instance, in
\Cref{ex:nonmatroid}, one could possibly consider randomized
mechanisms that ignore a random bidder in the set $\{a,
b\}$ before running Myerson's auction.

\bibliographystyle{alpha}
\bibliography{bibliography}

\clearpage

\appendix

\section{Missing proofs}

Here we provide a proof of \Cref{prop:matroid-update} for
updating the optimal solution of a weighted matroid:

\begin{proof}
Consider running the {\sc Greedy} algorithm in parallel on
both $\Mat_{\mid E - x}$ and $\Mat$ and call these
executions $\mathcal{E}^-, \mathcal{E}$ respectively.

In case $(I + x) \in \Ind$, the downward-closed property of
$\Mat$ guarantees that both executions will make identical
decisions on elements other than $x$ and element $x$ will be
included in the optimal solution of $\Mat$, hence $I^* = I + x$.

For the other case, suppose first that $x = y$, i.e. $x$ is
the min-weight element on $C$. At the time $x$ is inspected,
all other elements of $C$ have already been inspected and
added to the current solution, hence $x$ is not included
since it would violate independence. Therefore, both
executions proceed making identical decisions in every step
and arrive to the same solution $I^* = (I + x) - x = I$.

Now suppose that $x \neq y$. At the time element $x$ is
considered in $\mathcal{E}$, it can be safely included in the solution. The
reason is that if adding $x$ resulted in a circuit $C'$, then
$C' \subsetneq C$ violating the minimality of $C$. The next
step at which the two executions will diverge again is when
considering $y$ --- if they diverged at a previous step it
would again mean that $x$ is part of a circuit $C' \subsetneq
C$ --- at this point $\mathcal{E}$ ignores $y$.

Finally, suppose that the two executions diverge at a later
step on an element $e$ with $w(e) < w(y)$. Denote by $J$ the
current solution $\mathcal{E}^-$ is maintaining and thus $(J
+ x) - y$ is the current solution of $\mathcal{E}$.
There are two reasons the executions might diverge:

\begin{itemize}
\item $(J + e) \in \Ind$ but $( (J + x) - y ) + e \notin
\Ind$.

In this case, there must exist circuit
$C' \subseteq ( (J + x) - y ) + e$ such that $x, e
\in C'$ and $y \notin C'$. Therefore, by \Cref{prop:circuits}
there exists circuit $C''$ such that $e \in C'' \subseteq
(C' \cup C) - x$. This is a contradiction because $C''$ is a
circuit of $J + e$ which was assumed to be independent.

\item $(J + e) \notin \Ind$ but $( (J + x) - y ) + e \in
\Ind$.

This case is similar and the proof is omitted.
\end{itemize}

\end{proof}

\section{A note on revenue monotonicity of VCG}

A revenue monotonicity result  similar to ours is proven for
VCG in matroid markets in \cite{drs09}. We noticed that one
of the propositions used in the proof of that theorem is
incorrect. Here we provide a counter-example and offer an
alternative proof using our \Cref{lem:nonmatroid}.

Proposition 2.9 of \cite{drs09} claims that {\em ``A
downward-closed set system $(U, \Ind)$ with $\Ind \neq
\emptyset$ is a matroid if and only if for every pair $A,B$
of maximal sets in $\Ind$ and $y \in B$, there is some $x
\in A$ such that $A\backslash\{x\} \cup \{y\} \in \Ind$''}.
Here we notice that the ``backward'' direction of this
proposition does not hold.

Consider the following counter-example.
Let $U = \{ a, b, c, d, e \}$ and define $\Ind_3$ to be the
independent sets of the uniform rank $3$ matroid on the
4-element subset $\{ a, b, c, d \}$. Now let $\Ind$ be the
downwards-closed closure of $\Ind_3 \cup \{ \{ a, c, e\},
\{b, d, e\} \}$. We claim $\mathcal{S} = (U, \Ind)$ violates the above
proposition.

\begin{itemize}
\item $(U, \Ind)$ is {\em not} a matroid: This is easy to
see as $I = \{ a, c, e \}, J = \{d, e\}$ violate the
exchange property.

\item Nevertheless, for every pair $A, B$ maximal sets in
$\Ind$ and $y \in B$, there is some $x \in A$ such that $A
\backslash  \{x\} \cup \{y\} \in \Ind$.

First, notice that is suffices to show this for all $y \in B
\backslash A$. Otherwise, if $y \in A \cap B$ then set $x = y$
in which case $A \backslash \{x\} \cup \{y\} = A \in \Ind$.

A second observation is that if both $A, B$ are contained in
$\{a, b, c, d\}$ then the statements holds; after all the
forward direction of the proposition holds and the
restriction of $\mathcal{S}$ on that 4-element subset {\em
is} a matroid.

Finally, notice that $\mathcal{S}$ is symmetric under a
permutation that swaps the roles of $a \leftrightarrow b$
and $c \leftrightarrow d$.

The following table summarizes a case analysis on the choice
of $A, B, y$ and provides a choice of $x$ for each that satisfy
the aforementioned property. The cases that are missing are
equivalent to one of the cases in the table under symmetry.

\begin{equation*}
\begin{array}{|c|c|c||c|}
\hline
A & B & y & x\\
\hline
\{a, c, e\} & \{b, d, e\} & b & e\\
\hline
\{a, c, e\} & \{a, c, d\} & d & e\\
\hline
\{a, c, e\} & \{b, c, d\} & b & e\\
\hline
\{a, c, d\} & \{a, c, e\} & e & d\\
\hline
\{b, c, d\} & \{a, c, e\} & a & b\\
\hline
\{b, c, d\} & \{a, c, e\} & e & c\\
\hline
\end{array}
\end{equation*}

\end{itemize}

We now turn into providing a proof of the ``only if''
direction of Theorem 4.1 in \cite{drs09}.

We use the notation $\ones[S]$ for any subset $S \subseteq
U$ of bidder to mean the bidding profile where every bidder
in $S$ bids $1$ and every bidder in $U \backslash V$ bids
$0$.

\begin{theorem}[{``only if'' direction of \cite[Theorem
4.1]{drs09} }]
\label{thm:corrected}
Let $(U, \Ind)$ be a downward-closed set system that is not
a matoid, then there exists a set $V \subseteq U$ and an
element $x \in V$ such that the revenue of VCG on bid
profile $\ones[V-x]$ exceed the revenue of VCG on bid
profile $\ones[V]$.
\end{theorem}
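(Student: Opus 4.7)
The plan is to invoke Lemma~\ref{lem:nonmatroid} together with the specific construction from its proof, producing $I, J \in \Ind$ such that $V := I \cup J$ is a minimum-cardinality subset of $U$ on which $\Ind$ fails to be a matroid, $|I| > |J|$, and $J + z \notin \Ind$ for every $z \in I \setminus J$. Since $|I \setminus J| > |J \setminus I| \geq 1$ we have $|I \setminus J| \geq 2$, so we may pick any $x \in I \setminus J$; the goal is to establish $\Rev(\ones[V-x]) > \Rev(\ones[V])$.

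The central identity is that for any downward-closed system $(U', \Ind')$, the VCG revenue on the all-ones bid profile satisfies
\[
\Rev(\ones[U']) = r(U') - |E(U')|,
\]
where $r(U')$ is the rank and $E(U') := \{i \in U' : r(U' - i) < r(U')\}$ is the set of \emph{essential} elements (those lying in every maximum-cardinality independent set). Each winner $i$ in a max-size allocation pays $r(U' - i) - r(U') + 1 \in \{0, 1\}$, which is $0$ iff $i \in E(U')$, and every max-size allocation contains all of $E(U')$, which gives the formula. Applying it to $V$ and $V - x$---the latter is a matroid because it is a proper subset of the minimum non-matroid $V$, so its essentials coincide with its coloops---and using $r(V - x) = |I| - 1$ from property~(1) of Lemma~\ref{lem:nonmatroid}, the theorem reduces to
\[
|E(V)| - |E(V - x)| \geq 2.
\]

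This reduction follows from three structural observations. Property~(1) also gives $I \setminus J \subseteq E(V)$, while $J \setminus I$ contains no essentials of $V$ (since $I \subseteq V - y$ for $y \in J \setminus I$), so $|E(V)| = |I \setminus J| + |E(V) \cap (I \cap J)|$. Next, in the matroid $V - x$ the independent set $J$ extends to a basis; any strict extension would produce some $z \in (I \setminus J) - x$ with $J + z \in \Ind$, contradicting the exchange-failure, so $J$ is already a basis and $|J| = |I| - 1$. With both $I - x$ and $J$ being bases of $V - x$, every element of $(I \setminus J) - x$ (missing from basis $J$) and every element of $J \setminus I$ (missing from basis $I - x$) is a non-coloop, giving $E(V - x) \subseteq I \cap J$. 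Finally, any $y \in I \cap J$ with $y \notin E(V)$ witnesses an independent $K \subseteq V - y$ of size $|I|$ that must contain $I \setminus J \ni x$ by property~(1), so $K - x \subseteq V - x - y$ has size $|I| - 1$ and forces $y \notin E(V - x)$. Hence $E(V - x) \subseteq E(V) \cap (I \cap J)$, and combining with the first observation gives $|E(V)| - |E(V-x)| \geq |I \setminus J| \geq 2$.

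The main obstacle is bridging the non-matroid $V$, where essentials must be tracked directly, with the matroid $V - x$, where the standard coloop/basis combinatorics apply. The structural hinge that aligns the two sides is the equality $|J| = |I| - 1$, extracted from the combination of minimality of $V$ and the exchange-failure form of $(I, J)$: it simultaneously promotes $J$ to a basis of $V - x$ and confines every candidate coloop to $I \cap J$, at which point a residual appeal to property~(1) closes the count.
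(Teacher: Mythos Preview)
Your proof is correct and follows the same overall strategy as the paper: invoke \Cref{lem:nonmatroid} together with the minimality of $V = I \cup J$ from its proof, set $x \in I \setminus J$, and compare VCG revenues on $\ones[V]$ and $\ones[V-x]$ by counting ``essential'' elements (the paper calls this set $W$, and indeed $W = E(V)$ in your notation, so the identity $\Rev(\ones[V]) = r(V) - |E(V)|$ is exactly what the paper computes).

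The one substantive difference is that you extract the equality $|J| = |I|-1$: by the exchange failure $J+z \notin \Ind$ for all $z \in I\setminus J$, the set $J$ cannot be strictly extended inside $V-x$, so it is already a basis there. The paper instead extends $J$ to some basis $J' \supseteq J$ of $V-x$ without pinning down $J'$, and then argues that $(I\setminus J')-x$ is nonempty. Your observation lets you confine $E(V-x)$ cleanly to $E(V) \cap (I\cap J)$ and read off the gap $|E(V)| - |E(V-x)| \geq |I\setminus J| \geq 2$, whereas the paper's version only yields a gap of at least $|(I\setminus J')-x| \geq 1$. So your route is a mild sharpening of the same argument, not a different one.
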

\begin{proof}

By \Cref{lem:nonmatroid}, there exist $I, J \in \Ind$ with
properties
(\ref{lem:nonmatroid:propK})-(\ref{lem:nonmatroid:propMaximum}).
Let $V = I \cup J$ and let $x$ be an arbitrary element of $I
\backslash J$. We will prove that the revenue of VCG on
$\ones[V]$ is less than the revenue of VCG on $\ones[V -
x]$.

Consider the VCG payment of every bidder
in the bid profile $\ones[V]$. Since $I$ is a
maximum cardinality element of $\Ind|_V$
(\Cref{lem:nonmatroid}, property
(\ref{lem:nonmatroid:propMaximum})),
we may choose $I$ as the set of winners. Let
$W$ denote the intersection of all elements
of $\Ind|_V$ that have cardinality $|I|$.
By property (\ref{lem:nonmatroid:propK}) of
\Cref{lem:nonmatroid}, $I-J \subseteq W$.
Every element of $W$ pays zero, because
for $y \in W$ the maximum cardinality
elements of $\Ind|_{V-y}$ have size $|I|-1$,
hence $y$ could bid zero and still belong to
a winning set. On the other hand, every element
$y \in I-W$ pays 1, because by the definition
of $W$ there is a set $K \in \Ind|_V$ such
that $|K|=|I|$ but $y \not\in K$. If $y$
lowers its bid below 1, then $K$ rather
than $I$ would be selected as the set of
winners, hence $y$ must pay 1 in the VCG
mechanism. Finally, bidders not in $I$ pay
zero because they are not winners. The VCG
revenue on bid profile $\ones[V]$ is therefore
$|I\backslash W|$.

Now recall that $x$ denotes an arbitrary element
of $I\backslash J$, and consider the VCG payment of every
bidder in the bid profile $\ones[V-x]$. Since
$V-x$ is a proper subset of $V$,
$(V-x,\Ind|_{V-x})$ is a matroid.
The rank of this matroid is $|I|-1$,
since \Cref{lem:nonmatroid}, property
(\ref{lem:nonmatroid:propK}) implies
that $\Ind|_{V-x}$ contains no sets
of size $|I|$. We may assume that $I-x$
is chosen as the set of winners of VCG.
Let $J'$ denote a superset of $J$ that
is a basis of $(V-x,\Ind|_{V-x})$.
If $y$ is an element of $(I\backslash J')-x$, the set
$I-x-y$ has strictly fewer elements
than $J'$ so the exchange
axiom implies there is some $z \in J'$
such that $I-x-y+z \in \Ind$.
This set $I-x-y+z$ is a basis of
$(V-x,\Ind|_{V-x})$ that does not
contain $y$, hence the VCG payment of any
$y \in (I\backslash J')-x$ is $1$. Now consider
any $y \in I\backslash W$. By the definition
of $W$, there is a set
$K \in \Ind|_V$ such
that $|K|=|I|$ but $y \not\in K$.
Then $K-x$ is a basis of
$(V-x,\Ind|_{V-x})$ but
$y \not\in K$, implying that
$y$'s VCG payment is 1.

We have shown that in the bid profile
$\ones[V]$, the bidders in $I\backslash W$ pay $1$
and all other bidders pay zero, whereas
in the bid profile $\ones[V-x]$, the
bidders in $I\backslash W$ still pay $1$ and, in
addition, the bidders in $(I\backslash J')-x$ pay $1$.
Furthermore the set $(I\backslash J')-x$ is non-empty.
To see this, observe that $|J'| = |I|-1 = |I-x|$,
but $J' \neq I-x$
because then $J$ would be a subset of $I$,
contrary to our assumption that $I,J$
satisfy \Cref{lem:nonmatroid}, property
(\ref{lem:nonmatroid:propNonEmpty}). Hence,
$I-x$ contains at least one element
that does not belong to $J'$, meaning
$(I \backslash J') -x$ is nonempty. We have thus
proven that the VCG revenue
of $\ones[V-x]$ exceeds the VCG revenue of
$\ones[V]$ by at least $|I-J'-x|$, which is
at least $1$.
\end{proof}

\end{document}